\documentclass{article}
\usepackage[english]{babel}
\usepackage[T1]{fontenc}
\usepackage[utf8]{inputenc}
\usepackage{amsmath,amssymb,amsthm}
\usepackage{graphicx}
\usepackage{xcolor}
\usepackage{xspace}
\usepackage{hyperref}
\usepackage{subcaption}

\ifx\pdfsuppresswarningpagegroup\undefined\else
\fi

\title{Parametric Shortest Paths in a Linearly Interpolated Graph}
\author{
Jacob Sriraman$^{1}$, Eli Barton$^{1}$, and Brittany Terese Fasy$^{1}$\\
David L.~Millman$^{2}$, Brendan Mumey$^{1}$, and Nate Rengo$^{3}$\\
Braeden Sopp$^{1}$, Vasishta Tumuluri$^{4}$, and Binhai Zhu$^{1}$\\[0.5em]
\small $^{1}$Montana State University, USA \quad $^{2}$Blocky, USA\\
\small $^{3}$Western Colorado University, USA \quad $^{4}$University of North Carolina Chapel Hill, USA
}
\date{}

\usepackage{ourmacros}
\newtheorem{theorem}{Theorem}

\begin{document}

\maketitle
\begingroup
\makeatletter
\renewcommand{\thefootnote}{\@fnsymbol\c@footnote}
\makeatother
\footnotetext[0]{This is an abstract of a presentation given at CG:YRF 2026. It has been made public for the benefit of the community and should be considered a preprint rather than a formally reviewed paper. Thus, this work is expected to appear in a conference with formal proceedings and/or in a journal.}
\endgroup

\begin{abstract}
We consider the parametric shortest paths problem in a linearly interpolated graph. Given two positively-weighted directed graphs
$G_0=(V,E,\omega_0)$ and $G_1=(V,E,\omega_1),$ the linearly interpolated graph
    is the family of graphs
    $(1-\lambda)G_0+\lambda G_1$, parameterized by $\lambda\in [0,1]$. The problem is
    to compute all distinct parametric shortest paths. We compute a data
    structure in $\Theta(k|E|\log |V|)$ time, where~$k$ is the number of
    distinct parametric shortest paths over all~$\lambda\in [0,1]$ that exist for a
    nontrivial interval of parameters, each corresponding to a linear function
    in a maximal sub-interval of $[0,1]$. Using this data structure, a shortest
    path query takes $\Theta(\log k + \ell)$ time.
\end{abstract}

\noindent\textbf{Keywords:} parametric shortest paths, linear graph interpolation, line arrangements.

\medskip
\noindent\textbf{Related version:} A full version of the paper is available at \url{https://arxiv.org/abs/2604.08892}.

\medskip
\noindent\textbf{Funding:} This research was partially supported by NSF grants CCF 2243010, CCF 2046730, and DBI 2309902.

\medskip
\noindent\textbf{Acknowledgements:} The authors would like to thank Erin Chambers, who unknowingly had a conversation that inspired the problem.

\section{Introduction}

Shortest path is a well-studied problem dating back to the 1950s, with
a wide array of applications and
algorithms such as Dijkstra~\cite{dijkstra} and
Bellman--Ford~\cite{bellman1958routing,ford1956network} covered in
undergraduate coursework. In this paper, we consider the shortest path question
in a parameterized graph, looking at weighted digraphs arising as linear
combinations of two given graphs.

We consider a geometric standpoint,
and utilize the linearity of edge weights to propose a divide-and-conquer
algorithm to compute
shortest paths between a selected source and sink in a parameterized graph, in
$\Theta(k|E|\log|V|)$ time, where~$k$ is the minimum number of shortest paths
representing all shortest paths. This, in turn, supports shortest path queries
in $\Theta(\log k + \ell)$ time, where $\ell$ is the size of the path return.

\section{Problem Statement and Preliminaries}

Consider two weighted digraphs,~$G_0=(V,E,\omega_0)$ and~$G_1=(V,E,\omega_1)$,
where $G=(V,E)$ is the
underlying digraph for both and~$\omega_0,\omega_1 \colon E \to \R_+$ are
functions.
For $\lambda \in [0,1]$, consider the interpolated graph
$G_\lambda := (V, E, \omega_\lambda)$, where~$\omega_\lambda := (1-\lambda)\omega_0 + \lambda\omega_1$.
Given two nodes~$x,y \in V$, we investigate \emph{\bf how
shortest paths from nodes $x$ to~$y$ in $G_\lambda$ evolve with~$\lambda$.} Let
$P= (e_1, e_2, \dots, e_k )$ be a path in $G$ (represented as a sequence of
edges) and~$\lambda \in \R$. Then, the cost of $P$ in graph $G_\lambda$ is
\begin{equation}
    \cost{\lambda}{P} := \sum_{e \in P} \omega_\lambda(e)
                = (1-\lambda)\sum_{e \in P} \omega_0(e)
                    + \lambda\sum_{e \in P} \omega_1(e),
\end{equation}
which we note is
a linear function of $\lambda$,
depending only on the fixed path $P$. We plot these functions and observe that
all values of shortest paths in $G_\lambda$ lie on the piecewise-linear
lower envelope of this arrangement of lines, as shown in \figref{othergraph}.

\section{Data Structures and the Algorithm}

We present a recursive solution by
finding the lower envelope of the cost functions without explicitly
computing all paths from $x$ to $y$.
The output to the algorithm is an array $M$
of interval-path pairs such that for~$([s,t],P)\in M$,~$P$ is a shortest path in
the parameterized graph for all parameter values in $ [s,t]$.

\begin{figure}
        \centering
        \begin{subfigure}[b]{0.3\textwidth}
            \includegraphics[width=.9\textwidth]{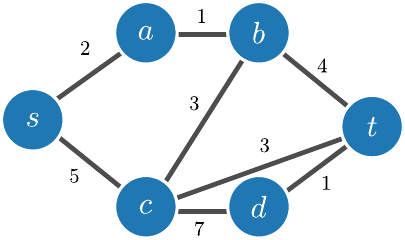}
            \caption{$H_0$}
            \label{fig:graph-0}
        \end{subfigure}
        ~
        \begin{subfigure}[b]{0.3\textwidth}
            \includegraphics[width=.9\textwidth]{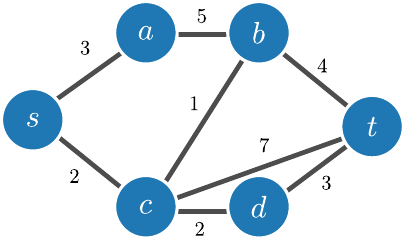}
            \caption{$H_1$}
            \label{fig:graph-1}
        \end{subfigure}
        ~
        \begin{subfigure}[b]{0.33\textwidth}
            \includegraphics[width=.9\textwidth]{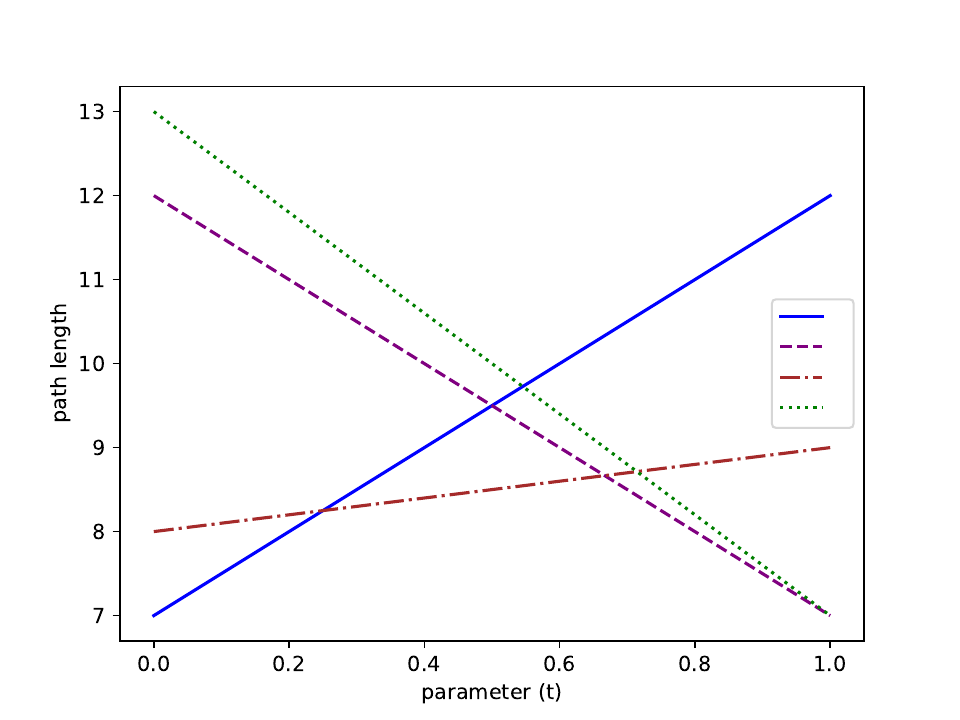}
            \caption{Path Lengths in $H_t$}
            \label{fig:graph-pathlengths}
        \end{subfigure}
        \caption{Linear interpolation on an underlying base graph $H$. We assume
        edges to be directed left to right. Our modification of Dijkstra's
    algorithm handles ties of shortest paths.}\label{fig:othergraph}
    \end{figure}

Our main algorithm is given in \algref{findle},
where
the initial call to our recursive function is
\Call{GetShortestPaths}{$G_0,G_1,x,y,0,1,\shortest{0},\shortest{1}$}, where
$\shortest{0}$ is a shortest path in~$G_0$ with minimum slope
and $\shortest{1}$ is a shortest path in $G_1$ with maximum slope.

In \lnintref{findle}{base-startif}{base-endif}, we consider the base case,
where $\shortest{s}$ is also a shortest path at $t$.
By the corresponding theorem in the full version~\cite{sriraman2026parametric},
we are done. Otherwise, we
consider the graphs of the cost functions and compute their intersection in
\lnref{findle}{intersection}; see \figref{lower-envelope}.

\[
\text{Given } \lambda, \tau \in [s,t ]\text{ we have cost function
}\cost{\lambda}{\shortest{\tau}} = \sum_{e \in \shortest{\tau}}w_s(e) -
\lambda\sum_{e \in \shortest{\tau}}w_t(e) - w_s(e).
\]

Then, we can consider the two cost functions $\cost{\lambda}{\shortest{s}} $ and
$\cost{\lambda}{\shortest{t}}$. So long as their slopes are not equal, we can
solve for their intersection point $r$.

Then, in \algtwolnref{findle}{nextpath-left}{nextpath-right}, we compute
$\shortest{r}^L$ and $\shortest{r}^R$ using a modification of
Dijkstra's algorithm selecting a single distinct shortest path that minimizes
(or maximizes) the slope of the associated cost
function depending on which interval ($[s,r]$ or $[r,t]$ respectively) we
analyze. This is done in $\Theta(|E| \log |V|)$~time. For full
details, see the full version~\cite{sriraman2026parametric}.

Finally, if $\cost{r}{\shortest{r}} = \cost{r}{\shortest{s}}$,
then $r$ is the unique breakpoint in $[s,t]$,
and $\shortest{s}$ is optimal on $[s,r]$ while
$\shortest{t}$ is optimal on $[r,t]$. If instead $\cost{r}{\shortest{r}} <
\cost{r}{\shortest{s}}$,
then we discover a new shortest path in the lower
envelope. Then, we recursively apply the procedure to
the two resulting intervals $[s,r]$ and $[r,t]$ in \lnref{findle}{recurse}.

\begin{figure}[h!]
    \centering
    \begin{subfigure}{0.30\textwidth}
        \includegraphics[width=\linewidth]{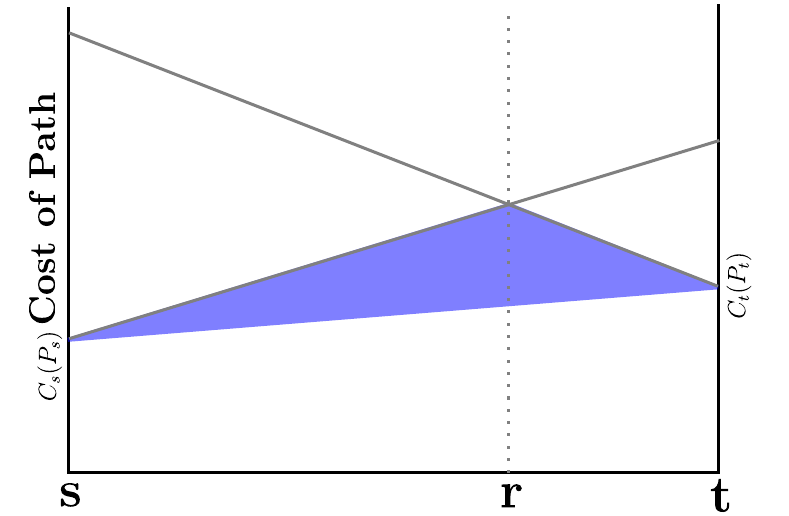}
        \caption{Before we iterate, $\cost{s}{\shortest{s}}$ and
            $\cost{t}{\shortest{t}}$ intersect
    at $r$.}
    \end{subfigure}
    \hfill
    \begin{subfigure}{0.30\textwidth}
        \includegraphics[width=\linewidth]{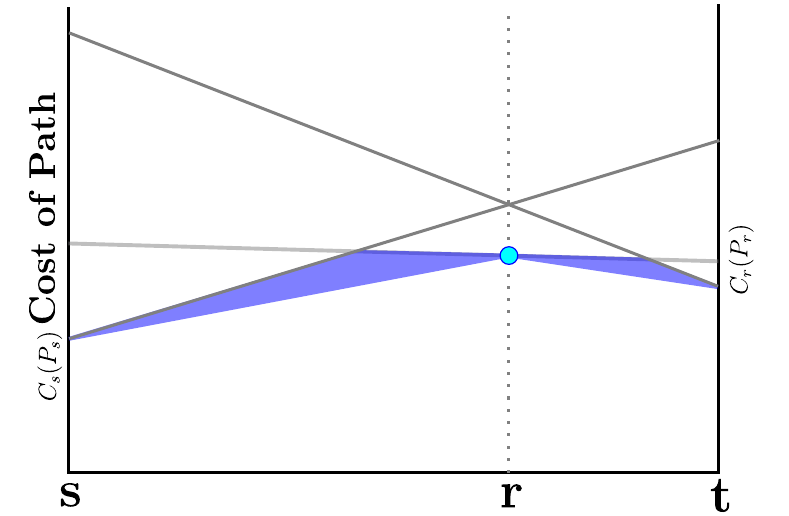}
        \caption{We compute $\cost{r}{\shortest{r}}$ and update the lower
        envelope.}
    \end{subfigure}
    \hfill
    \begin{subfigure}{0.30\textwidth}
        \includegraphics[width=\linewidth]{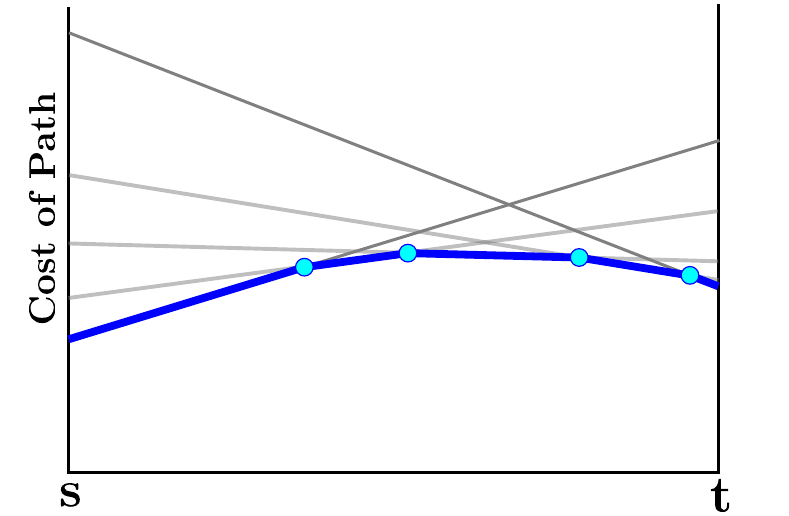}
        \caption{The algorithm terminates when lower envelope is done.}
    \end{subfigure}
    \caption{Snapshots of the algorithm before, during, and after it runs. The
        blue region
    represents the ``feasible region'' for future paths on the lower envelope.}

    \label{fig:lower-envelope}

\end{figure}

\begin{algorithm}[htb]
    \caption{\textsc{GetShortestPaths}$(G_0,G_1,x,y,s,t,\shortest{s},\shortest{t})$}\label{alg:findle}
    \begin{algorithmic}[1]
        \Require Weighted digraphs $G_0, G_1$, vertices $x,y \in V$,
        interval endpoints~$s \leq t \in \R$, a shortest path
        $\shortest{s}$ in $G_s$ of minimum slope,
        and a shortest path $\shortest{t}$ in $G_t$ of maximum~slope.
        \Ensure Sorted array of intervals with corresponding shortest paths

        \If {$\cost{t}{\shortest{s}}=\cost{t}{\shortest{t}}$}
                \label{ln:findle_base-startif}
            \State \Return $[([s,t],\shortest{s})]$
        \EndIf \label{ln:findle_base-endif}

        \State $r \gets$ value $r \in (s,t)$ such that
            $\cost{r}{\shortest{s}}=\cost{r}{\shortest{t}}$. \label{ln:findle_intersection}

        \State $P_r^L \gets \Call{DijkstraMaxSlope}{G,w_{r},\omega_1-\omega_0,x,y}$
            \label{ln:findle_nextpath-left}

        \State $P_r^R \gets \Call{DijkstraMinSlope}{G,w_{r},\omega_1-\omega_0,x,y}$
            \label{ln:findle_nextpath-right}

        \State \Return
        \Call{GetShortestPaths}{$s,r,\shortest{s},\shortest{r}$}~$\cup$~\Call{GetShortestPaths}{$r,t,\shortest{r},\shortest{t}$}
                \label{ln:findle_recurse}
    \end{algorithmic}
\end{algorithm}

Letting $T(k)$ denote the runtime when
there are $k$ lines contributing to the lower envelope.
Then,~$T(k)=T(k')+T(n-k'-1)+\Theta(|E|\log |V|)$,
which solves to~$T(k)=\Theta(k|E|\log |V|)$. The full
version~\cite{sriraman2026parametric} outlines how this
running time and algorithm compare to similar problems.

\section{Discussion}

\algref{findle} computes an ordered collection of intervals, each with an
associated linear cost function. Using this data structure, shortest-path
queries $\lambda \in (0,1)$
take~$\Theta(\log k + \ell)$ time via binary search, where $\ell$ is the size of
path returned. However, the runtime for computing
the data structure is dependent on the value of $k$.
\cite{gajjar2019lowerbound} shows that, in the worst case, the value of
$k$ is superpolynomial. Therefore, our future work includes investigating an
approximation algorithm by bounding the size of the
``feasible region,'' as defined in \figref{lower-envelope}.

\bibliographystyle{plainurl}
\bibliography{references}

\appendix

\section{Modified Dijkstra's}\label{append:mod-dij}

We present two modified versions of Dijkstra's algorithm,
$\Call{DijkstraMinSlope}{G,\omega,\delta,x,y}$
and
$\Call{DijkstraMaxSlope}{G,\omega,\delta,x,y}$,
that compute a shortest path representative in the weighted graph $(G,w)$ such
that, among all such shortest paths, the weighted path in $(G,\delta)$ is
minimized (maximized, respectively).

By our assumptions, the
edge weights in the weighted graph $(G,\omega)$ are positive.
As in Dijkstra'as algorithm,
we label each node $v$ in $V$ with a previous vertex in a path $P$ from~$s$ to
$v$; recall that the path $P$ is recovered by iteratively following `prev' until
we wind up at $s$, so it is unique for this annotated graph.
Rather than storing the length $\ell$ of $P$ alone, we store a tuple~$(\ell,m)$,
where $m$ is the slope of the cost function for $P$.
Then, each vertex has a distance-slope tuple associated to it, and
we compare these two tuples using the lexicographic ordering.
Initially, for each vertex, $\prev$ is set to null and the $\ell$ is set
to~$-\infty$. At each step, the algorithm pops $u$.
In the relaxation step, we consider an edge $(u,v)$ and an alternative path $P'$
that goes from~$s$ to~$u$ then takes the edge $(u,v)$. This path has associated
tuple~$(\ell',m')$, and we compare these two tuples using
lexicographical ordering when we wish to minimize the slope (and for maximizing
the slope, we use
$(\ell,m) \preceq (\ell',m')$ if $\ell < \ell'$ or if~$\ell
= \ell'$ and~$m' \leq m$)
Otherwise, the algorithm works just like Dijkstra's algorithm
and runs in~$\Theta(|E|\log |V|)$~time.

\section{Omitted Proof}\label{append:proofs}

\begin{theorem}[Uniform Shortest Path]\label{thm:samesp}
    Suppose that $P^*$ is a shortest path in both $G_0$ and~$G_1$. Then, $P^*$
    must be a shortest path in all $G_\lambda$ for $\lambda \in
    [0,1]$.
\end{theorem}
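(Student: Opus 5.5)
The plan is a direct convexity argument that uses only the linearity of $\cost{\lambda}{P}$ in $\lambda$. Fix an arbitrary $x$--$y$ path $Q$ in $G$; the goal is $\cost{\lambda}{P^*} \le \cost{\lambda}{Q}$ for every $\lambda \in [0,1]$. Since $P^*$ is shortest in $G_0$, we have $\cost{0}{P^*} \le \cost{0}{Q}$. Since it is shortest in $G_1$, we have $\cost{1}{P^*} \le \cost{1}{Q}$.

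The key step uses the displayed formula for the cost:
\[
\cost{\lambda}{P} = (1-\lambda)\,\cost{0}{P} + \lambda\,\cost{1}{P}
\]
for any path $P$. This identity holds because $\omega_\lambda = (1-\lambda)\omega_0 + \lambda\omega_1$ edgewise, and summing over the edges of $P$ preserves it. Applying the identity to both $P^*$ and $Q$ and subtracting gives
\[
\cost{\lambda}{Q} - \cost{\lambda}{P^*} = (1-\lambda)\bigl(\cost{0}{Q}-\cost{0}{P^*}\bigr) + \lambda\bigl(\cost{1}{Q}-\cost{1}{P^*}\bigr).
\]
For $\lambda \in [0,1]$ both coefficients $1-\lambda$ and $\lambda$ are nonnegative, and both bracketed differences are nonnegative by the first step. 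The right-hand side is therefore nonnegative. Since $Q$ was arbitrary, $P^*$ is a shortest $x$--$y$ path in $G_\lambda$.

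The only point needing care is that the comparison must range over a well-defined set of competitors, and that set must not depend on $\lambda$. I would note explicitly that every $G_\lambda$ has the same underlying digraph $(V,E)$, so the set of $x$--$y$ paths is identical for all $\lambda$. I would also note that all weights $\omega_\lambda$ are positive for $\lambda\in[0,1]$, as convex combinations of positive weights. Hence shortest paths exist, and it suffices to compare against simple paths, of which there are finitely many. Beyond that there is no real obstacle: the statement is just the fact that the pointwise minimum of linear functions is concave while each individual cost is linear. I would briefly remark on this geometric view, with the cost line of $P^*$ touching the lower envelope at both endpoints and hence lying on it throughout, to connect the result to the base case of \algref{findle}.
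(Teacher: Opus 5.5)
Your proof is correct and follows the paper's argument: the paper also derives $c_\lambda(P) = (1-\lambda)c_0(P) + \lambda c_1(P)$ by summing the edgewise interpolation, then combines the two endpoint inequalities using the nonnegative weights $1-\lambda$ and $\lambda$. You subtract first and the paper adds the scaled inequalities, which amounts to the same thing. Your remarks on positivity, existence of shortest paths, and finiteness of simple paths are harmless but unnecessary, since you prove the inequality directly for every competing path $Q$.
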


\begin{proof}

    Let $\pathset$ be the set of all paths from $i$ to $j$ in the base
    graph $G$, and let $P \in \pathset$ be any such path. Suppose that path $P* \in \pathset$ is a shortest path
    from $i$ to $j$ in both $G_0$ and $G_1$. Equivalently, $\forall P \in
    \pathset$
    $c_0(P^*) \leq c_0(P)$ and $c_1(P^*) \leq c_1(P).$

    Then, given $\lambda \in [0,1], (1-\lambda)c_0(P^*) \leq (1-\lambda)c_0(P)$
    and $\lambda(c_1(P^*)) \leq
    \lambda(c_1(P))$ $\forall P \in \pathset$.

    Thus, $\forall \lambda \in [0,1], \forall P \in \pathset,
    (1-\lambda)c_0(P^*) + \lambda(c_1(P^*))
    \leq (1-\lambda)c_0(P) + \lambda(c_1(P))$.

    Given $P \in \pathset$ and $\lambda \in [0,1]$, observe the following:
    \begin{align*}
    c_\lambda(P) &= \underset{e \in P}{\sum} \omega_\lambda(e) \\
           &= \underset{e \in P}{\sum} (1-\lambda)\omega_0(e) + \lambda(\omega_1(e)) \\
           &= \underset{e \in P}{\sum} (1-\lambda)\omega_0(e) + \underset{e \in
           P}{\sum} t(\omega_1(e)) \\
           &= (1-\lambda)\underset{e \in P}{\sum} \omega_0(e) +
           \lambda\underset{e \in P}{\sum} \omega_1(e) \\
           &= (1-\lambda)c_0(P) + \lambda(c_1(P))
    \end{align*}

    Thus, we have $c_\lambda(P) = (1-\lambda)c_0(P) + \lambda(c_1(P))$ for
    any $P \in \pathset$ and $\lambda \in [0,1]$. Then,
    $\forall \lambda \in [0,1]$ $\forall P \in \pathset$ $c_\lambda(P^*) \leq
    c_\lambda(P)$ i.e. the
    cost of $P^*$ in $G_\lambda$ is no more than the cost of any other path from $i$ to
    $j$ in $G_\lambda$, so it is a shortest path from $i$ to $j$.

    Thus, if $P^*$ is a shortest path from $i$ to $j$ in $G_0$ and $G_1$, then it
    must be a shortest path from $i$ to $j$ in $G_\lambda$, for any $\lambda \in [0,1]$.
\end{proof}

\section{Related Works}

Related problems have been investigated
in~\cite{DEppstein,JErickson,PAgarwal,young1991faster,Chakraborty2010twophaseparametric},
which consider graph optimization under edge weight functions that depend on
some changing parameter, with the goal of computing shortest paths more
efficiently than naively recomputing for each parameter value.
\cite{young1991faster} use Fibonacci heaps to slightly improve Karp and Orlin's
algorithm from \cite{karp1981parametric}, achieving $O(nm + n^2\log n)$ time for
a graph with $n$ vertices and $m$ edges. \cite{PAgarwal, DEppstein} study
related problems regarding spanning trees and matroid optimization.
\cite{JErickson} addresses the case of planar graphs via a parametric shortest
path tree by using a dynamic tree structure, which is similar to the methodology
in \cite{karp1981parametric}, while \cite{Chakraborty2010twophaseparametric}
employs a two-phase preprocessing approach, which allows for faster query-time
shortest path computation in exchange for more expensive~preprocessing.

Karp and Orlin study a problem most similar to ours in
\cite{karp1981parametric}. They propose an algorithm
to solve a very similar parametric shortest path problem, which corresponds to a
setting in which the edge weights in the graph $G_1$ are all~$\{0,1\}$-valued and
the parameter $\lambda$ varies in $\R$ rather than in $[0,1]$.
Their algorithm maintains an entire shortest path tree and updates it as the parameter
$\lambda$ varies. Starting from~$\lambda = -\infty$,(assuming no negative cycles)
the algorithm tracks changes in the shortest
path tree as $\lambda$ is increased, until a negative cycle is encountered,
at which point the algorithm terminates; as such, the breakpoints are processed in order.
The running time is $\Theta(|V||E| \log |V|)$.
For a sufficiently complicated lower envelope, a direct adaptation of this
algorithm
would result in a more time-efficient algorithm than what we present. However, looking
up the shortest path distance for a given pair of vertices and value of the parameter
$\lambda$ requires binary search over the breakpoints and a full traversal through
the corresponding shortest path tree. By recording only the distance
between the relevant pair of vertices rather than the full tree at each
breakpoint, this traversal cost
can eliminated and the query time becomes identical to ours.

\end{document}